\newtheorem{theorem}{Theorem}[section]
\newtheorem{lemma}[theorem]{Lemma}
\newtheorem{corollary}[theorem]{Corollary}
\newtheorem{remark}[theorem]{Remark}
\newtheorem*{acknowledgement}{Acknowledgement}
\newcommand{\tr}{{\rm Tr\hskip -0.2em}~}
\begin{document}

\title{Trace functions\\with\\applications in quantum physics}
\author{Frank Hansen}
\date{July 2, 2013}

\maketitle

\begin{abstract}

We consider both known and not previously studied trace functions with applications in quantum physics. By using perspectives we obtain convexity statements for different notions of residual entropy, including the entropy gain of a quantum channel as studied by Holevo and others.

We give new and simplified proofs of the Carlen-Lieb theorems concerning concavity or convexity of certain trace functions by making use of the theory of operator monotone functions. We then apply these methods in a study of new types of trace functions.
\\[1ex]
Keywords: Trace function, convexity, entropy gain, residual entropy, operator monotone function.
\end{abstract}

\section{Introduction and first results}

Consider a quantum system in which an observable $ A $ can be written as a sum $ A=A_1+\cdots+ A_k $ of a number of components $ A_1,\dots,A_k. $ If the components correspond to isolated subsystems then the total quantum entropy of the system $ S(A)=-\tr A\log A $ is equal to the sum of the entropies of each subsystem. In the general case we may define the residual entropy
\[
\varphi(A_1,\dots,A_k)=S(A)-\sum_{i=1}^k S(A_i)\qquad A=A_1+\cdots+A_k 
\]
as the difference between the total entropy of the system and the sum of the entropies of each subsystem; although it is a negative quantity. 

Another type of residual entropy is the entropy gain over a quantum channel studied by Holevo and others \cite{kn:holevo:2011:1, kn:holevo:2012:1},
\[
A\to S\bigl(\Phi(A)\bigr)-S(A),
\]
where $ \Phi $ is a quantum channel represented by a completely positive trace preserving linear map.

\begin{theorem} \label{main theorem}
Consider $ n\times n $ matrices $ A $ and $ n\times m $ matrices $ K. $ The trace function
\[
\varphi(A)=-\tr K^*AK\log(K^*AK)+\tr K^*(A\log A)K
\]
is convex in positive definite $ A $ for arbitrary $ K. $ 
\end{theorem}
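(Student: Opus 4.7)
The plan is to deduce the convexity of $\varphi$ by differentiating Lieb's concavity theorem at the boundary of its parameter range. Recall that Lieb's concavity theorem asserts that for each $p \in [0,1]$ and each fixed $n \times m$ matrix $K$, the function
\[
F_p(A,B) = \tr K^* A^p K \, B^{1-p}
\]
is jointly concave on pairs of positive definite matrices $(A,B)$ of sizes $n$ and $m$, respectively. I would use this as the only input.

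First I would substitute $B = K^*AK$. Since the map $A \mapsto (A, K^*AK)$ is affine, composition with the concave $F_p$ gives that
\[
h_p(A) \;=\; F_p(A, K^*AK) \;=\; \tr K^*A^p K \, (K^*AK)^{1-p}
\]
is a concave function of $A$ for every $p \in [0,1]$. Note in particular that $h_1(A) = \tr K^*AK$ is linear in $A$.

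Next, for each $p \in [0,1)$, I would consider the difference quotient
\[
Q_p(A) \;=\; \frac{h_p(A) - h_1(A)}{p - 1}.
\]
This is a concave function of $A$ divided by the negative scalar $p-1$, hence convex in $A$. Differentiating at $p=1$ using $\tfrac{d}{dp}A^p = A^p\log A$ and $\tfrac{d}{dp}(K^*AK)^{1-p} = -(K^*AK)^{1-p}\log(K^*AK)$ gives
\[
\lim_{p \to 1^-} Q_p(A) \;=\; h_1'(A) \;=\; \tr K^*(A\log A)K - \tr K^*AK\log(K^*AK) \;=\; \varphi(A).
\]
Because a pointwise limit of convex functions is convex, this forces $\varphi$ to be convex on the positive definite cone.

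The main obstacle will be the boundary behaviour of the derivative at $p=1$ in the degenerate case where $K^*AK$ is singular, since $\log(K^*AK)$ is then ill-defined on $\ker(K^*AK)$. I would deal with this by first proving the theorem for matrices $K$ of full column rank, so that $K^*AK > 0$ whenever $A>0$ and all operations above are standard matrix calculus, and then extending to arbitrary $K$ by approximation and continuity of $\varphi$ (interpreting $0\log 0 = 0$). Everything else — the joint concavity of $F_p$, the affine substitution, and the sign analysis of the difference quotient — is mechanical once Lieb's theorem is available.
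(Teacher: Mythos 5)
Your argument is correct, but it takes a genuinely different route from the paper. The paper never invokes Lieb's concavity theorem: it starts from the operator convexity of $f(t)=t\log t$, passes to the operator convex perspective $g(t,s)=t\log t-t\log s$ via Effros's theorem, realises $\tr\bigl(K^*(A\log A)K-K^*AK\log B\bigr)$ as $\tr K^*\bigl(L_{A\log A}-L_AR_{\log B}\bigr)(K)$ for the commuting left and right multiplication operators on $M_{n\times m}$, gets joint convexity in $(A,B)$ from Hansen's extension theorem, and only then performs the same affine substitution $B=K^*AK$ that you use. You instead take the joint concavity of $(A,B)\mapsto\tr K^*A^pK\,B^{1-p}$ as a black box and recover $\varphi$ as the endpoint derivative at $p=1$ of the concave family $h_p$, i.e.\ as a pointwise limit of the convex difference quotients $Q_p$; this is the classical Lieb--Ruskai device for extracting relative-entropy-type convexity from Wigner--Yanase--Dyson concavity, correctly adapted to carry the extra $K$, and your derivative computation checks out. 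The paper's route buys self-containedness and reusability (the perspective machinery is reused throughout the later sections, and Lieb's theorem is itself commonly proved by exactly such perspective arguments, so your proof sits logically downstream of the paper's); your route buys brevity given Lieb's theorem and makes transparent why the sign comes out convex rather than concave, namely division by the negative increment $p-1$. One small repair is needed in your closing approximation step: when $m>n$ there are no full-column-rank $n\times m$ matrices, so approximate differently --- observe that $\varphi$ depends on $K$ only through $KK^*$, since $\tr K^*(A\log A)K=\tr KK^*(A\log A)$ and the nonzero spectrum of $K^*AK=(A^{1/2}K)^*(A^{1/2}K)$ coincides with that of $A^{1/2}KK^*A^{1/2}$, then replace $K$ by the square matrix $(KK^*)^{1/2}$ and perturb it to be invertible. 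The paper silently faces the same degeneracy when it substitutes $B=K^*AK$ into a formula requiring $B>0$.
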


\begin{proof}
The function $ f(t)=t\log t $ defined for $ t>0 $ is operator convex. It is well-known but may be derived from~\cite[Theorem 2.4]{kn:hansen:1982} since $ f(0)=0, $ and $ \log t $ is operator monotone. The perspective function,
\[
g(t,s)=s f(ts^{-1})=t\log t-t\log s\qquad t,s>0,
\]
is therefore operator convex as a function of two variables ~\cite[Theorem 2.2]{kn:effros:2009:1}. Consider the Hilbert space $ \mathcal H=M_{n\times m} $ equipped with inner product given by $ (X,Y)=\tr Y^*X $ for matrices $ X, Y\in M_{n\times m}  $ and let $ L_A $ and $ R_B $ denote left and right multiplication with $ A\in M_n $ and $ B\in M_m $ respectively. If $ A $ and $ B $ are positive definite  matrices then $ L_A  $ and $ R_B $ are positive definite commuting operators on $ \mathcal H. $ Operator convexity of the perspective function $ g(t,s) $ is equivalent to convexity of the map
\[
\begin{array}{rl}
(A,B)&\to\tr K^*\bigl(L_{A\log A}-L_A R_{\log B}\bigr)(K)\\[1.5ex]
&=\tr\bigl(K^* (A\log A) K - K^* A K\log B\bigr) \qquad A,B>0
\end{array}
\]
for every $ K\in M_{n\times m} $ cf.~\cite[Theorem 1.1]{kn:hansen:2006:3}. 
The statement of the theorem now follows by replacing $ B $ with $ K^*AK $ in the above expression.
\end{proof}

\begin{corollary}
The residual entropy 
\[
\varphi(A_1,\dots,A_k)=-\tr A\log A+\sum_{i=1}^k \tr A_i\log A_i\qquad A=A_1+\cdots+A_k 
\]
is a convex function in  positive definite $ n\times n $ matrices $ A_1,\dots,A_k. $
\end{corollary}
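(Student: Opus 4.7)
The plan is to derive the corollary by a direct application of Theorem~\ref{main theorem} through a block matrix construction that packages the $k$-tuple $(A_1,\dots,A_k)$ into a single positive definite matrix on which the theorem can be invoked.

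First I would introduce the block diagonal matrix $\tilde A=\text{diag}(A_1,\dots,A_k)\in M_{kn}$ and the ``stacking'' map $K\in M_{kn\times n}$ determined by $K^*=(I_n,I_n,\dots,I_n)$ (a row of $k$ identity blocks). A direct computation then gives $K^*\tilde A K=A_1+\cdots+A_k=A$. Moreover, since the continuous functional calculus respects the block diagonal structure, $\tilde A\log\tilde A$ is block diagonal with diagonal blocks $A_i\log A_i$, so that $K^*(\tilde A\log\tilde A)K=\sum_{i=1}^k A_i\log A_i$. Substituting these two identities into the expression of Theorem~\ref{main theorem}, applied with dimensions $kn$ and $n$ in place of $n$ and $m$, yields exactly the residual entropy $\varphi(A_1,\dots,A_k)$.

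To conclude I would observe that the assignment $(A_1,\dots,A_k)\mapsto\tilde A$ is a linear map between real vector spaces of Hermitian matrices, and that $\tilde A$ is positive definite if and only if each $A_i$ is positive definite. Since the composition of a convex function with a linear map is convex on the preimage of its domain, the desired convexity of $\varphi$ in $(A_1,\dots,A_k)$ follows immediately from the convexity statement of Theorem~\ref{main theorem} in the single variable $\tilde A$.

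The argument is essentially a bookkeeping reduction, and I do not anticipate a genuine obstacle. The only subtlety is choosing $K$ so that $K^*\tilde A K$ collapses the block structure to the sum $A$ while $K^*(\tilde A\log\tilde A)K$ simultaneously picks out the block entries $A_i\log A_i$; once this is done the corollary is simply the restriction of the theorem to block diagonal inputs.
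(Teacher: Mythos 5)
Your proof is correct and takes essentially the same approach as the paper: both apply Theorem~\ref{main theorem} to the block diagonal matrix $\mathrm{diag}(A_1,\dots,A_k)$ together with a stacked-identity matrix $K$. The only (cosmetic, and arguably slightly cleaner) difference is that you take $K$ rectangular of size $kn\times n$, so that $K^*\tilde A K$ is directly the positive definite sum $A_1+\cdots+A_k$, whereas the paper pads $K$ with zero block columns to a square matrix and then reads off the $(1,1)$ block of the resulting expression.
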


\begin{proof}
We apply Theorem~\ref{main theorem} to block matrices of the form
\[
A=\begin{pmatrix}
A_1 &  0  & \cdots &  0\\
0       & A_2 & & 0\\
\vdots & & \ddots\\
0       &  0    &      & A_k
\end{pmatrix}       
\qquad\text{and}\qquad
K=\begin{pmatrix}
I & 0  & \cdots & 0\\
I & 0  & \cdots & 0\\
\vdots & \vdots & & \vdots\\
I & 0    & \cdots & 0            
\end{pmatrix},
\]
and since the entry in the first row and the first column of the block matrix
\[
-K^*AK\log(K^*AK)+K^*(A\log A)K
\]
is calculated to
\[
-(A_1+\cdots+A_k)\log\bigl(A_1+\cdots+A_k\bigr)+\sum_{i=1}^k A_i\log A_i
\]
the statement of the corollary follows.
\end{proof}

It is actually much easier to obtain the above result by expressing the residual entropy as a sum of relative entropies. We may however obtain other results by carefully choosing the arbitrary matrix $ K $ in Theorem~\ref{main theorem}.

\begin{corollary}
Consider the entropy gain 
\[
\varphi(A)= S\bigl(\Phi(A)\bigr)-S(A)
\]
over a quantum channel $ \Phi, $ where the channel is represented by a completely positive trace preserving linear map $ \Phi. $ The entropy gain $ \varphi(A) $ is a convex function in $ A. $
\end{corollary}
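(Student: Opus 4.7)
My plan is to put $\Phi$ in Kraus form and then reduce the claim to Theorem~\ref{main theorem} applied to a block-matrix dilation of $A$. Since $\Phi$ is completely positive and trace preserving, we may write $\Phi(A)=\sum_{i=1}^k V_i A V_i^*$ for Kraus operators $V_1,\dots,V_k\in M_n$ satisfying $\sum_{i=1}^k V_i^* V_i=I$.

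I would introduce the $kn\times kn$ block-diagonal matrix $\tilde A$ with $A$ on each diagonal block, together with the $kn\times n$ block column matrix $K$ whose $i$-th block is $V_i^*$. A direct computation then yields
\[
K^*\tilde A K=\sum_{i=1}^k V_i A V_i^* = \Phi(A),
\]
and since $\tilde A\log\tilde A$ is block diagonal with $A\log A$ on each block,
\[
K^*(\tilde A\log\tilde A)K=\sum_{i=1}^k V_i (A\log A) V_i^* = \Phi(A\log A).
\]

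Applying Theorem~\ref{main theorem} to this pair $(\tilde A,K)$ shows that
\[
-\tr K^*\tilde A K\log(K^*\tilde A K)+\tr K^*(\tilde A\log\tilde A)K = -\tr\Phi(A)\log\Phi(A)+\tr\Phi(A\log A)
\]
is convex in positive definite $\tilde A$, hence convex in $A$ because $A\mapsto\tilde A$ is affine (in fact linear). The trace-preserving condition collapses the second term, since $\tr\Phi(A\log A)=\tr\bigl((\sum_i V_i^* V_i)\,A\log A\bigr)=\tr A\log A$, identifying this function with the entropy gain $\varphi(A)=S(\Phi(A))-S(A)$.

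The argument presents no real difficulty once the Kraus form is in hand; the one thing to get right is choosing the dilation so that a single pair $(\tilde A,K)$ implements $\Phi(A)$ and $\Phi(A\log A)$ simultaneously, which is precisely what the block-diagonal $\tilde A$ achieves. This mirrors the strategy used in the first corollary, except that $K$ now encodes the Kraus operators of $\Phi$ rather than the fixed copy map, and the trace-preserving hypothesis on $\Phi$ replaces the algebraic cancellation obtained there by reading off a single block entry.
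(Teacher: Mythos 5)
Your proposal is correct and follows essentially the same route as the paper: block-diagonal dilation of $A$, Kraus operators encoded in $K$, an application of Theorem~\ref{main theorem}, and the trace-preserving condition to identify $\tr\Phi(A\log A)$ with $-S(A)$. The only (cosmetic) difference is that you take $K$ as a genuinely rectangular $kn\times n$ block column, whereas the paper pads $K$ with zero columns and reads off the $(1,1)$ block entry.
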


\begin{proof} A completely positive trace preserving linear map $ \Phi\colon M_n\to M_m $ is of the form
\[
\Phi(A)=\sum_{i=1}^k a_i^* A a_i
\]
where the so-called Kraus matrices $ a_1,\dots,a_k\in M_{n\times m} $ satisfy
\[
a_1 a_1^*+\cdots+a_k a_k^*=1.
\]
We now apply Theorem~\ref{main theorem} by setting
\[
A=\begin{pmatrix}
A &  0  & \cdots &  0\\
0   & A & & 0\\
\vdots & & \ddots\\
0   &  0    &     & A
\end{pmatrix}       
\qquad\text{and}\qquad
K=\begin{pmatrix}
a_1 & 0  & \cdots & 0\\
a_2 & 0  & \cdots & 0\\
\vdots & \vdots & & \vdots\\
a_k & 0    & \cdots & 0            
\end{pmatrix}.
\]
The entry in the first row and the first column of the block matrix
\[
-K^*AK\log(K^*AK)+K^*(A\log A)K
\]
is calculated to $ -\Phi(A)\log\Phi(A)+\Phi( A\log A). $ Since $ \Phi $ is trace preserving it follows that the entropic map
\[
A\to S\bigl(\Phi(A)\bigr)+\tr\Phi(A\log A)=S\bigl(\Phi(A)\bigr)-S(A)
 \]
 is convex.
\end{proof}

\begin{corollary}
The entropy gain
\[
\varphi(A_1,\dots,A_k)=S\bigl(\Phi_1(A_1)+\cdots+\Phi_k(A_k)\bigr)-\sum_{i=1}^k S(A_i)
\]
of $ k $ positive definite quantities observed through $ k $ quantum channels $ \Phi_1,\dots,\Phi_k $ is a  convex function
in $ A_1,\dots,A_k. $ 
\end{corollary}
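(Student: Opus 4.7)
The plan is to combine the block-matrix tricks from the two preceding corollaries into a single application of Theorem~\ref{main theorem}. First, I would write each channel $\Phi_i\colon M_n\to M_m$ in Kraus form,
\[
\Phi_i(A_i)=\sum_{j=1}^{r_i} a_{ij}^* A_i a_{ij},\qquad \sum_{j=1}^{r_i} a_{ij} a_{ij}^*=I_n,
\]
with Kraus matrices $a_{ij}\in M_{n\times m}$.

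Next I would let $A$ be the block-diagonal matrix whose diagonal lists $A_1$ repeated $r_1$ times, then $A_2$ repeated $r_2$ times, and so on through $A_k$ repeated $r_k$ times, and let $K$ be the matrix whose first block column stacks the Kraus operators $a_{11},\dots,a_{1r_1},a_{21},\dots,a_{kr_k}$ vertically in matching order, with zero columns elsewhere, in exact analogy with the Kraus-block construction used for a single channel. Then the $(1,1)$ block of $K^*AK$ collapses to $\sum_{i,j} a_{ij}^* A_i a_{ij}=\sum_{i=1}^k \Phi_i(A_i)$, so that the first trace in Theorem~\ref{main theorem} becomes $\tr\bigl(\sum_i \Phi_i(A_i)\bigr)\log\bigl(\sum_i\Phi_i(A_i)\bigr)=-S\bigl(\sum_i\Phi_i(A_i)\bigr)$.

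For the second term, a direct computation gives
\[
\tr K^*(A\log A)K=\sum_{i=1}^k \tr\Bigl(\sum_{j=1}^{r_i} a_{ij}a_{ij}^*\Bigr)(A_i\log A_i)=\sum_{i=1}^k\tr A_i\log A_i=-\sum_{i=1}^k S(A_i)
\]
by the trace-preserving identity $\sum_j a_{ij} a_{ij}^*=I_n$. Combining the two pieces with the signs prescribed by Theorem~\ref{main theorem} yields precisely $\varphi(A_1,\dots,A_k)$, and since $(A_1,\dots,A_k)\mapsto A$ is linear, convexity in $A$ granted by Theorem~\ref{main theorem} transfers to convexity of $\varphi$ in $(A_1,\dots,A_k)$.

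The only real obstacle is bookkeeping: one must repeat each $A_i$ exactly $r_i$ times along the diagonal of $A$ so that $(K^*AK)_{1,1}$ produces $\sum_i\Phi_i(A_i)$ and so that $\tr K^*(A\log A)K$ simultaneously telescopes, via the trace-preserving identity, to $-\sum_i S(A_i)$. Once this matching of multiplicities is in place, the argument is simply a merger of the proofs of the two preceding corollaries and no new analytic input is required.
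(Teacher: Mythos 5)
Your proof is correct and takes essentially the same approach as the paper: the paper's own proof of this corollary simply says to repeat the block-matrix argument of the single-channel case with one block per channel and leaves the details to the reader, and your construction (listing each $A_i$ with multiplicity $r_i$ on the diagonal and stacking all Kraus operators $a_{ij}$ in the first block column of $K$) supplies exactly those details. The bookkeeping via cyclicity of the trace and the identities $\sum_j a_{ij}a_{ij}^*=I$ is accurate.
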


\begin{proof}
The statement is obtained as in the above corollary by considering suitable block matrices, where each block corresponds to a single quantum channel. We leave the details to the reader. 
\end{proof}

\section{Carlen-Lieb trace functions}

We give new proofs of some of the statements in \cite{kn:carlen:2008} without using variational methods.

\begin{theorem} (Carlen-Lieb)
The trace function
\[
(A,B)\to\tr (A^p+B^p)^{1/r}\qquad 0<p\le r\le 1
\]
is concave for positive definite matrices $ A $ and $ B. $
\end{theorem}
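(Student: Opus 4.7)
The plan is to emulate the perspective-based method of Theorem~\ref{main theorem}, using an operator concave seed function tailored to the exponents $p$ and $r$. The natural candidate is $f(t) = t^{p/r}$, which is operator concave on $(0,\infty)$ because $0 < p/r \le 1$. Its perspective $g(t,s) = s\,f(ts^{-1}) = t^{p/r} s^{1-p/r}$ is jointly operator concave on $(0,\infty)^2$ by~\cite[Theorem 2.2]{kn:effros:2009:1}, and invoking~\cite[Theorem 1.1]{kn:hansen:2006:3} as in the proof of Theorem~\ref{main theorem} yields joint concavity, in positive definite $(X,Y)$, of the trace functional
\[
(X,Y) \longmapsto \tr K^* X^{p/r} K\, Y^{1-p/r}
\]
for every matrix $K$. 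This is the core concavity I would try to exploit.

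To pass from this to $\tr(A^p + B^p)^{1/r}$, I would combine the above with the operator concavity of $A \mapsto A^p$ and with an Epstein-type concavity argument. Epstein's theorem applied to $f(t) = t^{p/r}$ gives scalar concavity and PSD-monotonicity of $A \mapsto \tr(K^* A K)^{p/r}$, which composed with the operator concave map $M \mapsto M^p$ yields concavity of $M \mapsto \tr(K^* M^p K)^{p/r}$ for $0 < p \le 1$. A block-matrix substitution of the kind used in the corollaries to Theorem~\ref{main theorem} --- namely $M$ block-diagonal with blocks $A$ and $B$, and $K$ with identity entries in its first column so that $K^* M^p K = A^p + B^p$ in the top-left block --- would then give concavity of $(A, B) \mapsto \tr(A^p + B^p)^{p/r}$.

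The principal obstacle is the exponent upgrade from $p/r$ (which comes out naturally of the perspective and Epstein machinery, since $p/r \le 1$) to the target $1/r$ (which lies in $[1,\infty)$ and hence in the operator-convex rather than operator-concave regime). Bridging this gap is precisely the Carlen-Lieb phenomenon and is where the authors' argument must depart from the standard perspective approach. I expect the key additional ingredient to be an integral representation of $t^{1/r}$ in terms of the operator monotone building blocks $\lambda t/(t+\lambda)$ --- available when $r \ge 1/2$ via the operator monotonicity of $t^{1/r-1}$, and extended to $r < 1/2$ by iteration --- so that $\tr(A^p + B^p)^{1/r}$ becomes an integral over $\lambda$ of quantities that are each concave in $(A, B)$ by the perspective argument applied to a $\lambda$-dependent seed. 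The delicate point will be to manage the mixed-sign contributions arising from expansions such as $t^2/(t+\lambda) = t - \lambda + \lambda^2/(t+\lambda)$, so that the integrated expression genuinely retains concavity in $(A, B)$.
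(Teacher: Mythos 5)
There is a genuine gap, and your own final paragraph concedes it: the argument never reaches the exponent $1/r$. The perspective of your seed $f(t)=t^{p/r}$ is $t^{p/r}s^{1-p/r}$, which leads to Lieb-type concavity of $(X,Y)\mapsto\tr K^*X^{p/r}KY^{1-p/r}$ but has no bearing on $\tr(A^p+B^p)^{1/r}$; and the Epstein/block-matrix route you sketch only yields concavity of $(A,B)\mapsto\tr(A^p+B^p)^{p/r}$, where the outer exponent $p/r\le 1$ is in the easy regime. The whole content of the Carlen--Lieb statement is the outer exponent $1/r\ge 1$, and your proposed bridge (an integral representation of $t^{1/r}$ with mixed-sign terms) is left entirely unexecuted, with no indication that the sign problems you yourself flag can be controlled.

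The paper's proof closes this gap by a different choice of seed: it takes the \emph{operator monotone} function $f(t)=(t^p+1)^{1/p}$ (operator monotonicity is checked directly by showing the analytic continuation maps the upper half plane to itself), whose perspective is
\[
(t,s)\mapsto s\,f(ts^{-1})=(t^p+s^p)^{1/p},
\]
already a jointly operator concave function of two variables carrying the large exponent $1/p$. Composing with the operator monotone and operator concave map $t\mapsto t^{p/r}$ (here is where $p\le r$ enters) gives joint operator concavity of $(t,s)\mapsto(t^p+s^p)^{1/r}$, so the large exponent $1/r$ is never applied to a single operator variable, only to the two-variable perspective. The theorem then follows by evaluating this function on the commuting left and right multiplication operators $L_A$ and $R_B$ and taking the trace against $K=I$, via \cite[Theorem 1.1]{kn:hansen:2006:3}. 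The lesson is that the correct seed is not the outer power function but the two-variable expression $(t^p+s^p)^{1/p}$ realised as a perspective; once you see that, no integral representation or sign bookkeeping is needed.
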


\begin{proof}
The function
\[
f(t)=(t^p+1)^{1/p}\qquad t>0
\]
is operator monotone.
 Indeed, if $ z=r e^{i\theta} $ with $ 0<\theta<\pi $ then $ z^p=r^p e^{i p\theta}. $ Since we add a positive constant it is plain that the argument of $ z^p +1 $ is less than $ p\theta $ but still positive. The argument of $ f(z) $ is therefore between zero and $ p\theta\le\theta <\pi. $  We have shown that the analytic continuation of $ f $ to the complex upper half plane has positive imaginary part, thus $ f $ is operator monotone. 

The perspective function
\[
(t,s)\to s f(ts^{-1})=s \bigl(t^ps^{-p}+1)^{1/p}=(t^p+s^p)^{1/p}
\]
is therefore operator concave, cf.~\cite[Theorem 2.2]{kn:effros:2009:1} and so is the function,
\[
g(t,s)=(t^p+s^p)^{1/r}\qquad t,s>0,
\]
that appears by composing with the operator monotone and operator concave function $ t\to t^{p/r}. $

The left and right multiplication operators $ L_A  $ and $ R_B $ are positive definite commuting operators on the Hilbert space $ \mathcal H=M_n $ equipped with the inner product $ (A,B)=\tr B^* A. $ It follows that the (super) operator mapping
\[
(A,B)\to \big(L_A^p+R_B^p\bigr)^{1/r}
\]
is concave according to the preceding remark. The trace function
\begin{equation}\label{more general trace function}
(A,B)\to\tr K^* \big(L_A^p+R_B^p\bigr)^{1/r}(K)
\end{equation}
is therefore concave by~\cite[Theorem 1.1]{kn:hansen:2006:3}. The statement now follows by choosing $ K $ as the identity matrix. Indeed, under the trace we have
\[
\tr (L_A+L_B)(A+B)^n=\tr (A+B)^{n+1}
\]
for each $ n, $ and we thus obtain
\[
\tr \big(L_A^p+R_B^p\bigr)^{1/r}(I)=\tr (A^p+B^p)^{1/r}
\]
by simple algebraic calculations.
\end{proof}

Notice that the statement in (\ref{more general trace function}) is stronger than what is obtained in the
reference~\cite{kn:carlen:2008}.

\begin{theorem}
The function
\[
f(t)=(t^p+1)^{1/p}\qquad t>0
\]
is operator convex for $ 1\le p\le 2. $
\end{theorem}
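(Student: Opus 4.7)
The plan is to reduce operator convexity of $f$ to operator monotonicity of an auxiliary function, and then to verify the latter by the same complex-analytic technique used in the preceding theorem. Since $f(0) = 1$ is finite, the criterion \cite[Theorem~2.4]{kn:hansen:1982} already invoked in the proof of Theorem~\ref{main theorem} states that $f$ is operator convex on $[0,\infty)$ if and only if
\[
g(t) = \frac{f(t) - f(0)}{t} = \frac{(t^p+1)^{1/p} - 1}{t}
\]
is operator monotone on $(0,\infty)$, since the shifted function $f-1$ vanishes at zero. The theorem therefore reduces to showing that $g$ is operator monotone for $1 \le p \le 2$.

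To establish operator monotonicity of $g$, I would extend it analytically to the complex upper half-plane using the principal branches of $z^p$ and of the $p$-th root, and verify that $\operatorname{Im} g(z) \ge 0$ whenever $\operatorname{Im} z > 0$. Writing $z = r e^{i\theta}$ with $\theta \in (0,\pi)$, the argument tracks successively the arguments of $z^p$, of $z^p + 1$, of $(z^p + 1)^{1/p}$, of the numerator $(z^p+1)^{1/p} - 1$, and finally of the quotient by $z$. An explicit check on the positive imaginary axis $z = iy$ (splitting at $y = 1$ when $p = 2$, since then $z^2 + 1$ changes sign) confirms $\operatorname{Im} g(iy) > 0$ throughout, which is a useful sanity test.

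The main obstacle is precisely the point at which the preceding theorem's argument exploited $p \le 1$. There, $\arg(z^p) = p\theta < \pi$ kept $z^p$ in the upper half-plane, and the sequence of arguments decreased monotonically through each operation, delivering a non-negative imaginary part for free. For $1 < p \le 2$, however, $p\theta$ may exceed $\pi$, so $z^p$ can leave the upper half-plane, and the monotone argument-chase breaks down. One must split into the subcases $p\theta \le \pi$ and $p\theta > \pi$, describe where $z^p + 1$ sits in each case, and then argue that the $p$-th root, the subtraction of $1$ in the numerator, and the division by $z$ combine to restore $\operatorname{Im} g(z) \ge 0$. An alternative would be to produce the Herglotz integral representation of $g$ directly by extracting the representing measure from the boundary values of the analytic extension across its branch cut; but since, already for $p=2$, that cut lies on the imaginary axis rather than on the negative real axis, this route rests on essentially the same angular analysis and is not obviously simpler.
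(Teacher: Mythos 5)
Your reduction step is sound and is genuinely different from the paper's route: since $f-1$ vanishes at $0$, \cite[Theorem 2.4]{kn:hansen:1982} does convert operator convexity of $f$ on $[0,\infty)$ into operator monotonicity of $g(t)=\bigl((t^p+1)^{1/p}-1\bigr)/t$ on $(0,\infty)$, and the paper never forms this quotient at all. But your proof stops exactly where the difficulty begins. You correctly observe that the argument-chase used for $0<p\le 1$ breaks down because $z^p$ can leave the upper half-plane when $p\theta>\pi$, and you then write that one ``must split into the subcases $p\theta\le\pi$ and $p\theta>\pi$ \dots and argue that the $p$-th root, the subtraction of $1$, and the division by $z$ combine to restore $\operatorname{Im}g(z)\ge 0$'' --- but that combination is the entire content of the theorem and is nowhere carried out. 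The check on the positive imaginary axis is a one-dimensional slice and establishes nothing about the rest of the half-plane. The needed estimate is genuinely delicate: for $p\theta>\pi$ the value $f(z)=(z^p+1)^{1/p}$ itself lies in the lower half-plane (which is precisely why $f$ fails to be operator monotone there), so $\operatorname{Im}g(z)\ge 0$ has to come from a cancellation between $\arg\bigl(f(z)-1\bigr)$ and $\arg z$ that you have not bounded. As it stands the proposal is a plan with its central step missing.

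For comparison, the paper avoids this half-plane-preservation problem entirely. It computes the boundary values $\frac{1}{\pi}\lim_{\theta\to\pi}\operatorname{Im}f(re^{i\theta})$ to obtain the explicit weight $h_p(\lambda)$ in the Pick integral representation of $f$, valid a priori for $0<p\le 1$; it then notes that $\pi<A_p(\lambda,\pi)<2\pi$ for $1<p<2$ makes both sides of that representation real analytic in $p$ on all of $(0,2)$, so the same formula persists for $1\le p\le 2$, where the weight $h_p$ is nonpositive. A representation $\beta+t+\int\bigl(\frac{\lambda}{1+\lambda^2}-\frac{1}{t+\lambda}\bigr)h_p(\lambda)\,d\lambda$ with $h_p\le 0$ is operator convex term by term, and the theorem follows. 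If you want to rescue your approach, you must either complete the angular case analysis for $g$ in full, or extract the representing measure of $g$ from its own boundary values --- at which point you are redoing essentially the paper's computation with an extra division by $t$.
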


\begin{proof}
We have previously shown that $ f $  is operator monotone for $ 0<p\le 1. $ Let us calculate the representing measure. \\[1ex]

\hbox{
\begin{minipage}[c]{5em}
\centering
$f(t)$
\end{minipage}

\begin{minipage}[c]{20em}
\centering
\includegraphics[scale=0.6]{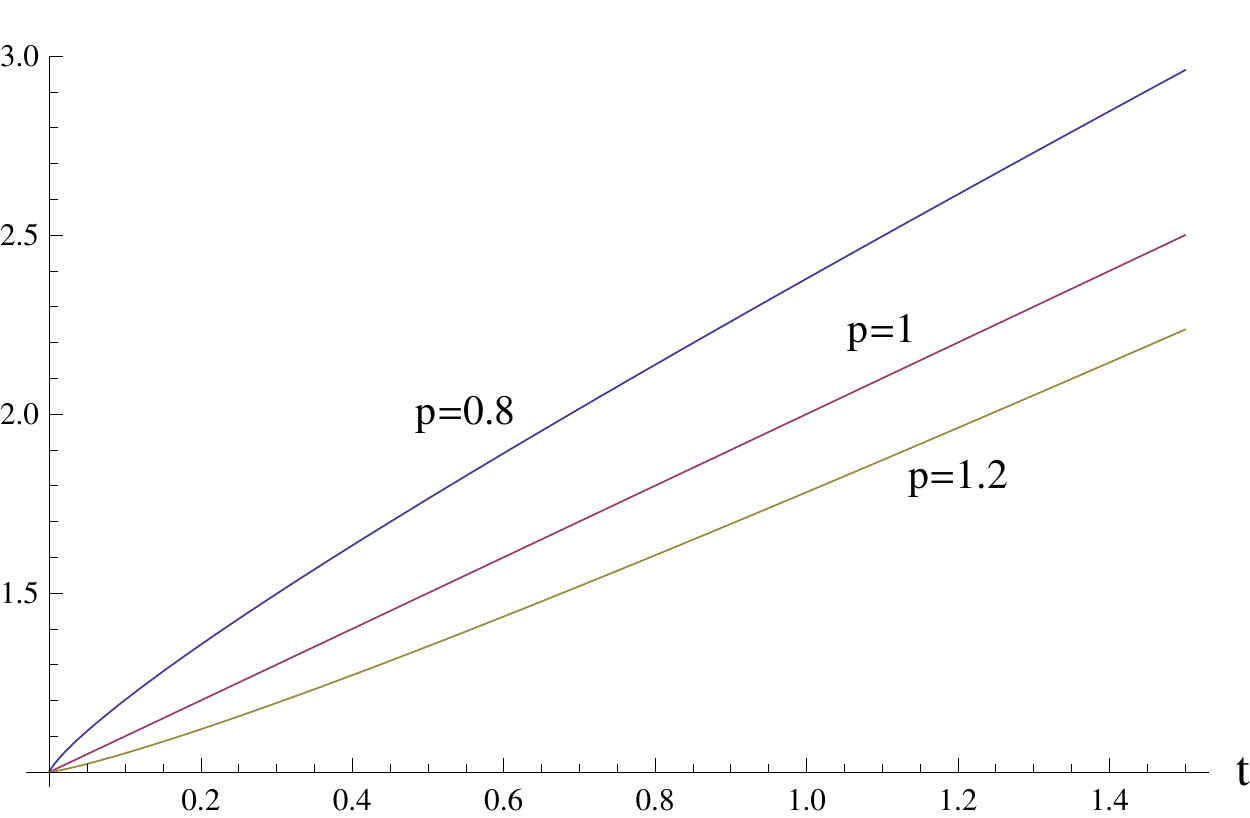}

\end{minipage}
}
{\-}\\[1ex]

We set $ z=r e^{i\theta} $ for $ r>0 $ and $ 0<\theta<\pi $ and calculate the analytic continuation of $ f, $
\[
f(r e^{i\theta})=\bigl(r^p e^{ip\theta}+1\bigr)^{1/p},
\] 
into the complex upper half plane. Let $ \arg z $ with $ 0\le\arg z<2\pi $ denote the angle between the positive $ x $-axis and the complex number $ z=x+iy. $ With this non-standard convention $ \arg z $ is an analytic function in $ \mathbf C\backslash [0,\infty), $ and the angle  $ A_p(r,\theta) $  between the positive $ x $-axis and  $ (r^p e^{ip\theta}+1)^{1/p} $  is given by 
\[
 A_p(r,\theta)=\frac{1}{p}\arg(r^p\cos p\theta +1 +i r^p \sin p\theta),
\]
and it satisfies
\[
0< A_p(r,\theta)<\theta<\pi\qquad\text{for $ 0<p\le 1, $ $ r>0, $ $ 0<\theta<\pi. $}
\]
The imaginary part of the analytic continuation of $ f $ is therefore given by
\[
\Im f(r e^{i\theta})=
(1+r^{2p}+2r^p\cos p\theta)^{1/(2p)} \sin A_p(r,\theta),
\]
and the representing measure of $ f $ is obtained as the limit
\[
\frac{1}{\pi} \lim_{\theta\to\pi} \Im f(r e^{i\theta})=
\frac{1}{\pi}(1+r^{2p}+2r^p\cos p\pi)^{1/(2p)} \sin A_p(r,\pi).
\]
It follows that
\begin{equation}\label{integral representation of p-function}
(t^p+1)^{1/p}=\beta+t+\int_0^\infty\left(\frac{\lambda}{1+\lambda^2}-\frac{1}{t+\lambda}\right) h_p(\lambda)\,d\lambda,
\end{equation}
where $ \beta $ is a constant determined by setting $ t=0 $ in equation (\ref{integral representation of p-function}), and the non-negative function $ h_p $ is given by
\[
h_p(\lambda)=\frac{1}{\pi}(1+\lambda^{2p}+2\lambda^p\cos p\pi)^{1/(2p)} \sin A_p(\lambda,\pi)\qquad \lambda> 0,
\]
cf.~\cite{kn:hansen:2013:1} for the details.
The key in the proof is the realisation that
 \[
 \pi<A_p(\lambda,\pi)<2\pi\qquad\text{for $ 1<p< 2 $ and  $ r >0, $}
 \] 
and this is so because $ \arg z<\arg (z+1)<2\pi $ when $ z $ is in the lower complex plane. It follows that
both sides in equation (\ref{integral representation of p-function}) are real analytic functions in $ p $ in the whole interval $ (0,2). $ \\[1ex] 
\hbox{
\begin{minipage}{5em}
\centering
$ h_p(\lambda) $
\end{minipage}

\begin{minipage}{20em}
\centering
\includegraphics[scale=0.5]{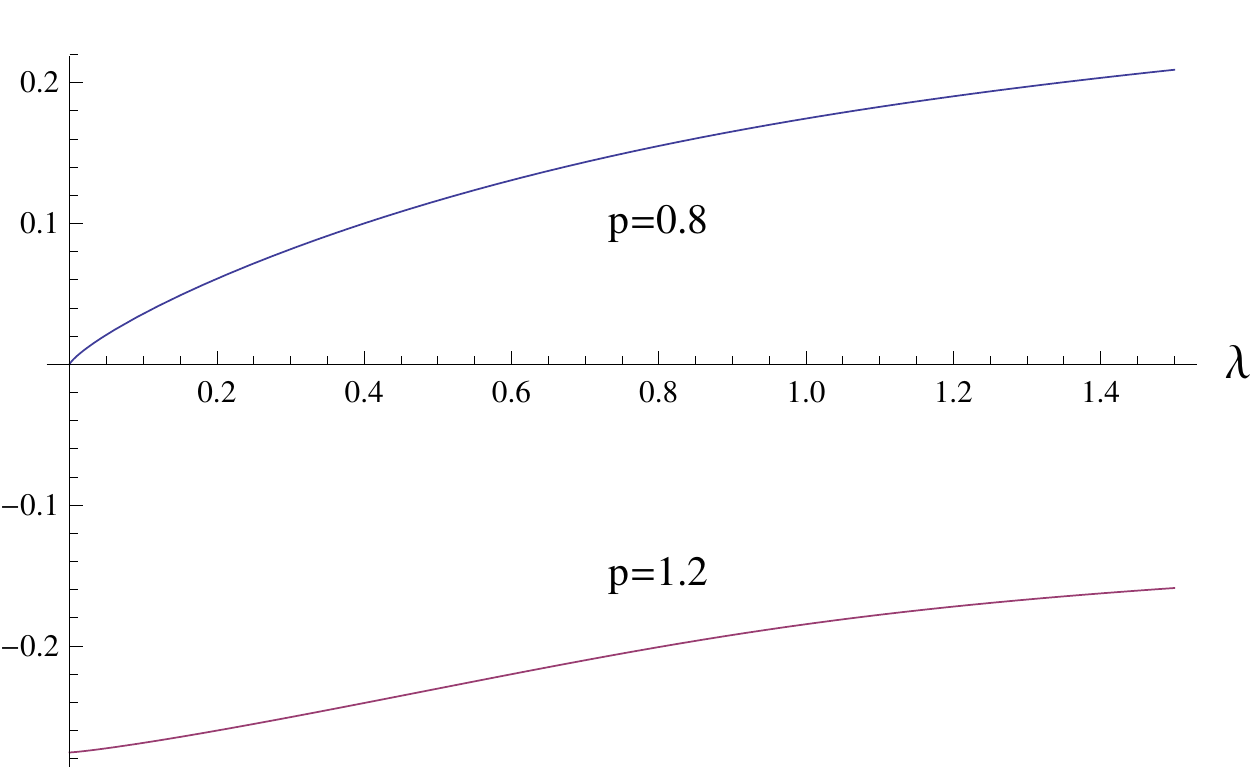}
\end{minipage}
}
{\-}\\[2ex]
The formula in (\ref{integral representation of p-function}) is consequently valid also for $ 1\le p\le 2. $
However,  for $ 1< p< 2 $ the weight function $ h_p $ is negative implying that $ f $ is operator convex. Notice that $ h_p=0 $ for $ p=1. $\end{proof}

The same line of arguments as for $ 0<p\le 1 $ applies, so we obtain:

\begin{corollary}
The trace function
\[
(A,B)\to\tr (A^p+B^p)^{1/p}\qquad 1\le p\le 2
\]
is convex for positive definite matrices $ A $ and $ B. $
\end{corollary}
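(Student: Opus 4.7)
My plan is to essentially transcribe the Carlen-Lieb argument, substituting the operator convexity of $f(t)=(t^p+1)^{1/p}$ for $1\le p\le 2$ (just established in the preceding theorem) in place of the operator monotonicity that was used in the $0<p\le 1$ case, and specialising $r=p$ so that no secondary composition is needed.

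First I would invoke the previous theorem to conclude that $f(t)=(t^p+1)^{1/p}$ is operator convex on $(0,\infty)$ for $1\le p\le 2$. Then I would apply~\cite[Theorem 2.2]{kn:effros:2009:1} to deduce that the perspective
\[
g(t,s)=s f(ts^{-1})=(t^p+s^p)^{1/p},\qquad t,s>0,
\]
is operator convex as a function of two variables. No further composition with $t\mapsto t^{p/r}$ is required because here $r=p$.

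Next I would introduce the Hilbert space $\mathcal H=M_n$ with inner product $(X,Y)=\tr Y^*X$ and the left/right multiplication operators $L_A,R_B$; for positive definite $A,B$ these are positive definite and commute, so operator convexity of $g$ translates into convexity of the superoperator mapping
\[
(A,B)\to \bigl(L_A^p+R_B^p\bigr)^{1/p}.
\]
By~\cite[Theorem 1.1]{kn:hansen:2006:3} the trace function $(A,B)\to\tr K^*\bigl(L_A^p+R_B^p\bigr)^{1/p}(K)$ is then convex for every $K\in M_n$.

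Finally I would specialise to $K=I$ and use the identity $\tr\bigl(L_A^p+R_B^p\bigr)^{1/p}(I)=\tr(A^p+B^p)^{1/p}$, established by the same algebraic manipulation already noted in the Carlen--Lieb proof (expanding the fractional power through its integral/power-series representation and using $\tr(L_A+R_B)^{n}(X)=\tr(A^k X B^{n-k})$ summed appropriately, together with the fact that $L$ and $R$ commute). I do not foresee any genuine obstacle: the only non-routine input is the operator convexity of $f$ for $1\le p\le 2$, which has just been proved, and the rest of the machinery (perspective, Hansen's trace theorem, the reduction $K=I$) is already set up verbatim in the concavity argument above.
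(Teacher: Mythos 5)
Your proposal is correct and follows exactly the route the paper intends: the paper's own justification of this corollary is the single remark that ``the same line of arguments as for $0<p\le 1$ applies,'' i.e.\ feed the operator convexity of $f(t)=(t^p+1)^{1/p}$ into the perspective construction, Hansen's trace theorem, and the specialisation $K=I$. Your observation that no secondary composition with $t\mapsto t^{p/r}$ is needed since $r=p$ is also the right reading of why the argument simplifies here.
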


\subsection{Variational inequalities}

\begin{remark}
Let $ x $ and $ y $ be positive numbers and take $ 0<p< 1. $  It is easy to prove that
\[
(x^p+y^p)^{1/p}\le \lambda^{(p-1)/p} x + (1-\lambda)^{(p-1)/p}y\qquad\text{for}\quad 0<\lambda<1
\]
with equality for $ \lambda=x^p (x^p+y^p)^{-1}. $
\end{remark}

\begin{theorem}
Let  $ 0< p<1 $  and take positive definite $ n\times n $ matrices  $ A,B. $ Then
\[
\tr (A^p+B^p)^{1/p}\le \tr\bigl( X^{(p-1)/p} A + (1-X)^{(p-1)/p} B\bigr)
\]
for each $ n\times n $ matrix $ X $ with $ 0<X<1. $ If $ A $ and $ B $ commute then there is equality for $  X=A^p (A^p+B^p)^{-1}. $
\end{theorem}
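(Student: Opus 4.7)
The plan is to recognize the right-hand side, for fixed $X$ with $0<X<1$, as a supporting linear functional for the concave trace function $F(A,B)=\tr(A^p+B^p)^{1/p}$. Concavity of $F$ on pairs of positive definite matrices for $0<p\le 1$ is the first Carlen--Lieb theorem established above, and $F$ is manifestly homogeneous of degree one, so any hyperplane tangent to the graph of $F$ at a point $(A_0,B_0)$ is automatically a linear (not merely affine) upper bound on $F$ everywhere.

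I would choose the base point to be $A_0=X^{1/p}$ and $B_0=(1-X)^{1/p}$. These matrices are positive definite, commute (both are functions of $X$), and satisfy $A_0^p+B_0^p=I$. The advantage of a commuting base point is that the Fr\'echet derivative of $F$ there can be evaluated by simultaneous diagonalization and the Daleckii--Krein formula; the computation reduces to a scalar chain rule in the common eigenbasis and delivers $\nabla_A F(A_0,B_0)=A_0^{p-1}(A_0^p+B_0^p)^{(1-p)/p}$, with the symmetric expression for $\nabla_B F$. The crucial feature of this base point is that $(A_0^p+B_0^p)^{(1-p)/p}=I$, so the gradient collapses to exactly $(X^{(p-1)/p},(1-X)^{(p-1)/p})$.

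Concavity of $F$ then yields the tangent-plane inequality
\[
F(A,B)\le F(A_0,B_0)+\tr X^{(p-1)/p}(A-A_0)+\tr(1-X)^{(p-1)/p}(B-B_0),
\]
and Euler's identity (equivalently, the direct check that both $F(A_0,B_0)$ and $\tr X^{(p-1)/p}A_0+\tr(1-X)^{(p-1)/p}B_0$ equal $n$) cancels the $A_0,B_0$ contributions, producing the claimed inequality. For the equality statement, when $A$ and $B$ commute I set $C=A^p+B^p$; with $X=A^pC^{-1}$ a direct commutative calculation gives $X^{(p-1)/p}A+(1-X)^{(p-1)/p}B=(A^p+B^p)C^{(1-p)/p}=C^{1/p}$, whose trace is $\tr(A^p+B^p)^{1/p}$.

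The main obstacle is the derivative formula at $(A_0,B_0)$: non-commutativity of $A_0$ with an arbitrary Hermitian perturbation normally complicates the Fr\'echet derivative of $A\mapsto(A^p+B_0^p)^{1/p}$, but because $A_0$ and $B_0$ commute the first-order perturbation of the eigenvalues of $A^p+B_0^p$ is transparent, and the scalar chain rule applies to recover the claimed gradient. Once this gradient is in hand, concavity plus homogeneity do the rest.
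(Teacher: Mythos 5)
Your proposal is correct and follows essentially the same route as the paper: both exploit concavity plus positive homogeneity of $\tr(X^p+Y^p)^{1/p}$ to obtain a linear (tangent-plane) upper bound, evaluated at the base point $\bigl(X^{1/p},(1-X)^{1/p}\bigr)$ where $A_0^p+B_0^p=I$ collapses the Fr\'echet derivative to $\bigl(X^{(p-1)/p},(1-X)^{(p-1)/p}\bigr)$ --- the paper phrases this as the inequality $d\varphi(X,Y)(A,B)\ge\varphi(A,B)$ from Lieb's Lemma 5, which is your supporting-hyperplane argument. Your explicit verification of the equality case for commuting $A,B$ is a small addition the paper's proof leaves implicit.
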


\begin{proof}
We know that the trace function $ \varphi(X,Y)=\tr (X^p+Y^p)^{1/p} $ is concave in positive definite $ X $ and $ Y. $ It is also positively homogeneous since
\[
\varphi(tX,tY)=t\varphi(X,Y)\qquad t>0.
\]
It follows that the Fréchet differential
\[
d\varphi(X,Y)(A,B)\ge\varphi(A,B)
\]
for positive definite $ X,Y,A,B, $ cf. for example \cite[Lemma 5]{kn:lieb:1973:1}.  We notice that
\[
d\varphi(X,Y)(A,B)=d_1\varphi(X,Y)A+d_2\varphi(X,Y)B
\]
by the chain rule for Fréchet differentials. By
setting $ f(t)=t^{1/p} $ and $ g(t)=t^p $ we obtain
\[
\begin{array}{rl}
d_1\varphi(X,Y)A&=\tr df(X^p+Y^p) dg(X)A=\tr f'(X^p+Y^p) dg(X)A\\[2ex]
&=\displaystyle\frac{1}{p}\tr (X^p+Y^p)^{(1-p)/p} dg(X)A
\end{array}
\]
and similarly
\[
d_2\varphi(X,Y)B=\frac{1}{p}\tr (X^p+Y^p)^{(1-p)/p} dg(Y)B.
\]
We thus derive that
\[
\tr (A^p+B^p)^{1/p}\le\frac{1}{p} \tr (X^p+Y^p)^{(1-p)/p}\bigl(dg(X)A+ dg(Y)B\bigr).
\]
Let now $ 0<X<1 $ and set $ Y=(1-X^p)^{1/p}. $ Then $ X^p+Y^p=1 $ and thus
\[
\begin{array}{rl}
\tr (A^p+B^p)^{1/p}&\le\displaystyle\frac{1}{p} \tr \bigl(dg(X)A+ dg(Y)B\bigr)\\[2.5ex]
&=\displaystyle\frac{1}{p} \tr \bigl(g'(X)A+ g'(Y)B\bigr)\\[2.5ex]
&=\displaystyle \tr\bigl( X^{p-1} A+(1-X^p)^{(p-1)/p}B\bigr).
\end{array}
\]
We may replace $ X $ with $ X^{1/p} $ since any $ 0<X<1 $ can be obtained in this way, and we obtain
\[
\tr (A^p+B^p)^{1/p}\le\tr\bigl( X^{(p-1)/p} A+ (1-X)^{(p-1)/p} B\bigr)
\]
which is the statement of the theorem.
\end{proof}

\section{New types of trace functions}

\begin{theorem}\label{main operator concave function}
Let $ 0< p\le 1.  $ The function of two variables,
\[
g(t,s)=\left\{\begin{array}{ll} 
\displaystyle\frac{t-s}{t^p-s^p}\qquad &t\ne s\\[3ex]
\frac{1}{p} t^{1-p}           &t=s,
\end{array}\right.
\]
defined for $ t,s>0, $ is operator concave.
\end{theorem}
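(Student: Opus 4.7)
My plan is to realize $g(t,s)^{1/(1-p)}$ as a perspective of an operator monotone function of one variable, and then recover operator concavity of $g$ itself by the composition rule with $x\mapsto x^{1-p}$. I would introduce the auxiliary functions
\[
h(x)=\frac{x-1}{x^{p}-1},\qquad H(x)=h(x)^{1/(1-p)}
\]
(with $h(1)=1/p$ by continuity). A direct computation gives $g(t,s)=s^{1-p}h(t/s)$, hence $g(t,s)^{1/(1-p)}=s\,H(t/s)$, which is precisely the perspective of $H$ in the sense of Effros.

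The core of the proof will be to show that $H$ is operator monotone on $(0,\infty)$, by establishing the two-sided argument bound
\[
0\le\arg h(z)\le(1-p)\pi
\]
for $z$ in the upper half-plane, so that $\arg H(z)=\arg h(z)/(1-p)\in[0,\pi]$. Both bounds will follow from the Poisson/maximum-principle representation of the bounded harmonic function $\arg h$: on the positive real axis $\arg h=0$, since $h$ is real and positive there; on the negative real axis approached from the upper half-plane, $\arg(z-1)=\pi$, while $\arg(z^{p}-1)$ ranges monotonically from $\pi$ (as $z\to 0^{-}$) down to $p\pi$ (as $|z|\to\infty$), so $\arg h(z)=\pi-\arg(z^{p}-1)$ lies in $[0,(1-p)\pi]$. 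This mirrors the contour analysis used for $(t^{p}+1)^{1/p}$ in Theorem~2.2 of the paper.

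Given $H$ operator monotone (hence operator concave, since $H>0$), Effros's perspective theorem gives that $sH(t/s)=g(t,s)^{1/(1-p)}$ is jointly operator concave in $(t,s)$. Finally, since $x\mapsto x^{1-p}$ is both operator monotone and operator concave on $(0,\infty)$ for $1-p\in[0,1]$, and the composition of an operator monotone operator concave function with an operator concave function is again operator concave (a direct check from the definitions), one concludes that
\[
g=\bigl(g^{1/(1-p)}\bigr)^{1-p}
\]
is operator concave.

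I expect the main obstacle to be the upper bound $\arg h\le(1-p)\pi$: the lower bound $\arg h\ge 0$ alone delivers operator monotonicity of $h$ and could be verified by a direct sign-check in the style of Theorem~2.2, but the sharper upper bound is what makes it possible to raise $h$ to the higher power $1/(1-p)>1$ without losing operator monotonicity. This sharper bound is essential because the natural quantity to which Effros's perspective applies is the homogeneous-of-degree-one function $g^{1/(1-p)}$ rather than $g$ itself, whose homogeneity degree is only $1-p$.
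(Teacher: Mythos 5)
Your proposal is correct in outline (modulo the degenerate case $p=1$, where $g\equiv 1$ and the exponent $1/(1-p)$ is undefined, so that case must be disposed of separately), but it takes a genuinely different route from the paper. The paper never forms $h^{1/(1-p)}$. Instead it checks, by the same easy one-sided argument estimate used for the Carlen--Lieb theorem, that $f_\lambda(t)=(\lambda t^p+1-\lambda)^{1/p}$ is operator monotone for each $\lambda\in[0,1]$, passes to the perspective $(t,s)\mapsto(\lambda t^p+(1-\lambda)s^p)^{1/p}$, composes with the operator monotone and operator concave function $x\mapsto x^{1-p}$, and then concludes from the elementary identity
\[
\frac{t-s}{t^p-s^p}=\frac{1}{p}\int_0^1\bigl(\lambda t^p+(1-\lambda)s^p\bigr)^{(1-p)/p}\,d\lambda ,
\]
which exhibits $g$ as an average of operator concave functions; the whole proof costs one substitution $u=\lambda t^p+(1-\lambda)s^p$ and avoids any maximum principle. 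The delicate point of your route is precisely the two-sided bound $0\le\arg h\le(1-p)\pi$: it is true, but you must carry out the boundary analysis honestly (monotonicity of $\arg(z^p-1)$ along the image of the negative half-axis, which follows since that image is a ray from $-1$ in direction $e^{ip\pi}$ not passing through the origin; the behaviour of $h$ at $0$, at the removable singularity $1$, and at $\infty$ where $h(z)\sim z^{1-p}$; and a Phragm\'en--Lindel\"of step for the bounded harmonic function $\arg h$ on the half-plane). In compensation your argument proves the stronger and independently interesting statement that $x\mapsto\bigl((x-1)/(x^p-1)\bigr)^{1/(1-p)}$ is operator monotone, and it realises $g^{1/(1-p)}$ as an honest perspective of a single function, whereas the paper only realises $g$ as an integral of compositions of perspectives. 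Both proofs finish with the same composition principle (an operator monotone, operator concave outer function applied to an operator concave two-variable function), which the paper also invokes in its treatment of $(A,B)\mapsto\tr(A^p+B^p)^{1/r}$.
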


\begin{proof}
We first prove that for $ 0\le\lambda\le 1 $ the function
\[
f_\lambda(t)=\bigl(\lambda t^p + 1-\lambda\bigr)^{1/p}\qquad t>0
\]
is operator monotone. Indeed, if $ z=r e^{i\theta} $ with $ 0<\theta<\pi, $ then $ z^p=r^p e^{i p\theta}. $ Since we add a positive constant it is plain that the argument of $ \lambda z^p +1-\lambda $ is less that $ p\theta $ but still positive. The argument of $ f_\lambda(z) $ is therefore between zero and $ \theta<\pi. $  We have shown that the analytic continuation of $ f_\lambda $ to the complex upper half plane has positive imaginary part, thus $ f_\lambda $ is operator monotone. 

The perspective function
\[
(t,s)\to s f_\lambda(ts^{-1})=\bigl(\lambda t^p+(1-\lambda) s^p\bigr)^{1/p}\qquad t,s>0
\]
is operator concave and so is any function that appears as the composition of an operator monotone function of one variable with the perspective. It follows that
\[
(t,s)\to\bigl(\lambda t^p +(1-\lambda)s^p\bigr)^{(1-p)/p}
\]
is operator concave. However, by an elementary calculation we may write
\[
\frac{t-s}{t^p-s^p}=\frac{1}{p}\int_0^1 \bigl(\lambda t^p+(1-\lambda)s^p\bigr)^{(1-p)/p}\,d\lambda,
\]
and the statement of the theorem follows.
\end{proof}

Take $ 0\le p\le 1. $ Since the function $ \displaystyle (t,s)\to\frac{t-s}{t^p-s^p} $ is operator concave,  it follows that the trace function
\[
(A,B)\to\tr K^*\frac{L_A-L_B}{L_A^p-R_B^p}(K)
\]
is concave in positive definite $ n\times n $ matrices for any $ n\times n $ matrix $ K, $ where $ L_A $ and $ R_B $ denote left and right multiplication with $ A $ and $ B. $\\[1ex]
By choosing $ K $ as the unit matrix we obtain:
\begin{theorem}
Let $ 0<p\le 1. $ The trace function
\[
(A,B)\to\tr\frac{A-B}{A^p-B^p}
\]
is concave in positive definite matrices.
\end{theorem}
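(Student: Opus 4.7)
My plan is to specialise at $ K=I $ the trace-functional concavity that has been established in the paragraph immediately preceding the theorem: since $ g(t,s)=(t-s)/(t^p-s^p) $ is operator concave by Theorem~\ref{main operator concave function} and $ L_A,R_B $ are positive definite commuting operators on the Hilbert space $ \mathcal H=M_n $ with inner product $ (X,Y)=\tr Y^*X, $ the transfer principle \cite[Theorem 1.1]{kn:hansen:2006:3} yields concavity of
\[
(A,B)\to\tr K^* g(L_A,R_B)(K)
\]
for every fixed $ K\in M_n. $ Taking $ K=I $ gives concavity of $ (A,B)\to\tr g(L_A,R_B)(I), $ and all that remains is to identify this with $ \tr(A-B)/(A^p-B^p). $

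To carry out that identification I would use the joint spectral decomposition of the commuting pair $ (L_A,R_B). $ Writing $ A=\sum_i a_i P_i $ and $ B=\sum_j b_j Q_j, $ the map $ X\mapsto P_i X Q_j $ is a joint spectral projection of $ L_A $ and $ R_B $ corresponding to the eigenvalue pair $ (a_i,b_j), $ so that
\[
g(L_A,R_B)(I)=\sum_{i,j} g(a_i,b_j)\, P_i Q_j
\]
and consequently
\[
\tr g(L_A,R_B)(I)=\sum_{i,j} g(a_i,b_j)\tr(P_iQ_j).
\]
When $ A $ and $ B $ commute this reduces to the ordinary one-variable functional-calculus trace of $ g(A,B) $ using the continuous diagonal value $ g(t,t)=t^{1-p}/p, $ and in general it provides the natural operator-theoretic meaning of $ \tr(A-B)/(A^p-B^p). $

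The main — essentially only — obstacle is notational: deciding what $ (A-B)/(A^p-B^p) $ should denote for non-commuting positive definite $ A $ and $ B. $ Once this is read as $ g(L_A,R_B)(I) $ via joint functional calculus of the commuting pair $ (L_A,R_B), $ the asserted concavity is an immediate specialisation of the preceding general trace concavity, and no further analytic work is needed beyond the two preceding theorems.
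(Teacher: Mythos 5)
Your proposal is correct and follows exactly the route the paper takes: the theorem is obtained by applying \cite[Theorem 1.1]{kn:hansen:2006:3} to the operator concave function $g$ of Theorem~\ref{main operator concave function} evaluated at the commuting pair $(L_A,R_B)$ and then setting $K=I$. Your additional remarks on the joint spectral decomposition merely make explicit the identification of $\tr g(L_A,R_B)(I)$ with $\tr\frac{A-B}{A^p-B^p}$, which the paper leaves implicit.
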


\section{The Fréchet differential}

\begin{theorem}\label{Main concavity result for traces}
Consider the function $ f(t)=t^p $ for $ 0< p\le 1. $ The map
\[
x\to \tr h\,df(x)^{-1}h,
\]
defined for positive definite $ n\times n $ matrices $ x, $ is concave for each self-adjoint $ n\times n $ matrix $ h. $
\end{theorem}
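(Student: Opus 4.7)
The plan is to rewrite $\tr h\,df(x)^{-1}h$ as a trace function of the type already handled immediately before this theorem, thereby inheriting concavity from the operator concavity of $g(t,s)=(t-s)/(t^p-s^p)$ established in Theorem~\ref{main operator concave function}.

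First I would represent the Fréchet differential spectrally. For $f(t)=t^p$ and positive definite $x=\sum_i\lambda_i P_i$, the Daleckii-Krein formula gives
\[
df(x)(h)=\sum_{i,j}\frac{\lambda_i^p-\lambda_j^p}{\lambda_i-\lambda_j}\,P_i h P_j,
\]
with the convention $p\lambda_i^{p-1}$ when $\lambda_i=\lambda_j$. Since $L_x$ and $R_x$ are positive definite, commuting, and self-adjoint super-operators on the Hilbert-Schmidt space $\mathcal H=M_n$, this is exactly
\[
df(x)=\frac{L_x^p-R_x^p}{L_x-R_x}
\]
read off by joint functional calculus, and its inverse is
\[
df(x)^{-1}=\frac{L_x-R_x}{L_x^p-R_x^p}=g(L_x,R_x),
\]
where the continuous extension of $g$ at $t=s$ to $\tfrac{1}{p}t^{1-p}$ reproduces the diagonal reciprocal $\tfrac{1}{p}\lambda^{1-p}$ of $df(x)$.

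With this identification in hand, for any self-adjoint $h$ we have
\[
\tr h\,df(x)^{-1}h=\tr h\,g(L_x,R_x)(h),
\]
which is precisely the trace function $(A,B)\to \tr K^*\,g(L_A,R_B)(K)$ evaluated with $K=h$ on the diagonal $A=B=x$. That function is jointly concave in positive definite $(A,B)$ by \cite[Theorem 1.1]{kn:hansen:2006:3} applied to the operator concave $g$ of Theorem~\ref{main operator concave function}; this is precisely the reasoning used just above to deduce the preceding theorem. Restriction to the diagonal $A=B=x$ is an affine embedding and therefore preserves concavity, and the result follows upon noting that $\tr K^*\,g(L_x,R_x)(K)=\tr h\,df(x)^{-1}h$ when $K=h=h^*$.

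The only real subtlety is the super-operator identity $df(x)^{-1}=g(L_x,R_x)$, which rests on the divided-difference representation of the Fréchet differential together with the matching of $g$ and its continuous extension with the corresponding joint spectrum of $L_x$ and $R_x$. Once that identification is in place, the proof reduces entirely to a specialisation of results already developed in the paper.
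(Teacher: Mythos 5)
Your proposal is correct and follows essentially the same route as the paper: both identify $df(x)^{-1}$ with $g(L_x,R_x)$ via the divided-difference (Löwner matrix) representation of the Fréchet differential, and then invoke the operator concavity of $g$ from Theorem~\ref{main operator concave function} together with \cite[Theorem 1.1]{kn:hansen:2006:3} applied on the diagonal $A=B=x$. The only cosmetic difference is that you phrase the identification through spectral projections and joint functional calculus, whereas the paper writes it as a Hadamard product in an eigenbasis of $x$; these are equivalent.
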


\begin{proof}
Consider $ x>0 $ and a basis $ (e_i)_{i=1}^n $ in which $ x $ is diagonal with eigenvalues given by $ xe_i=\lambda_i e_i $ for $ i=1,\dots,n. $ We may then calculate
\[
e_i df(x) h e_j=e_i h e_j \frac{\lambda_i^p-\lambda_j^p}{\lambda_i-\lambda_j}\qquad i,j=1,\dots,n.
\]
Expressed in this basis  $ df(x)h=h\circ L_f\bigl(\lambda_1,\dots,\lambda_n\bigr) $ is the Hadamard product (entry-wise) product of $ h $ and the Löwner matrix 
\[
L_f\bigl(\lambda_1,\dots,\lambda_n\bigr)=  \left(\frac{\lambda^p_i-\lambda^p_j}{\lambda_i-\lambda_j}\right)_{i=1}^n .
\]
The inverse Fréchet differential $ df(x)^{-1}h $  is therefore well-defined and given by the Hadamard product
\[
df(x)^{-1} h =h\circ\left(\frac{\lambda_i-\lambda_j}{\lambda^p_i-\lambda^p_j}\right)_{i,j=1}^n
\]
expressed in the same basis and thus
\[
\tr h\,df(x)^{-1}h=\sum_{i,j=1}^n |(he_i\mid e_j)|^2\,\frac{\lambda_i-\lambda_j}{\lambda^p_i-\lambda^p_j}
=\tr h g(L_x, R_x)h,
\]
where $ L_x $ and $ R_x $ are left and right multiplication with $ x $ and
\[
g(t,s)=\frac{t-s}{t^p-s^p}\qquad t,s>0.
\]
The operators $ L_x $ and $ R_x $ are positive definite commuting operators on the Hilbert space $ \mathcal H=M_m $ equipped with the inner product $ (A,B)=\tr B^* A. $ The last expression $ \tr h\,df(x)^{-1}h=\tr h g(L_x, R_x)h $ is independent of any particular basis and since $ g $ is operator concave by Theorem~\ref{main operator concave function}, we obtain that
the map $ x\to\tr  h\,df(x)^{-1}h $ is concave by~\cite[Theorem 1.1]{kn:hansen:2006:3}.
\end{proof}

\begin{theorem}\label{Theorem: convexity of Q-form}
Consider the function $ f(t)=t^p $ for $ 0<p\le 1. $ The map
\[
(x,h)\to \tr h\, d f(x)h\qquad x>0, h^*=h
\]
of two variables is convex.
\end{theorem}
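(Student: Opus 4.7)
The plan is to recognise the quadratic form $\tr h\, df(x) h$ as the Fenchel (Legendre) dual of the quadratic form $\tr k\, df(x)^{-1} k$ appearing in Theorem~\ref{Main concavity result for traces}, and to transfer concavity of the latter in $x$ into joint convexity of the former in $(x,h)$.

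First I would verify that, for each $x>0$, the Fréchet differential $df(x)$ is a positive definite invertible linear operator on the real Hilbert space of self-adjoint matrices equipped with the inner product $\langle h_1,h_2\rangle=\tr h_1 h_2$. This is immediate from the calculation in the proof of Theorem~\ref{Main concavity result for traces}: in an eigenbasis of $x$, the operator $df(x)$ acts by Hadamard multiplication with the Löwner matrix, whose entries $(\lambda_i^p-\lambda_j^p)/(\lambda_i-\lambda_j)$ are strictly positive for $0<p\le 1$.

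Next I would invoke the elementary variational identity
\[
\tr h\, df(x) h \;=\; \sup_{k=k^*}\bigl\{\,2\tr k h-\tr k\, df(x)^{-1} k\,\bigr\},
\]
valid for any positive invertible self-adjoint operator, with the supremum attained at $k=df(x)h$. For each fixed self-adjoint $k$, the function $(x,h)\mapsto 2\tr k h-\tr k\, df(x)^{-1} k$ is a sum of a term that is linear (hence convex) in $h$ and independent of $x$, and a term that is convex in $x$ and independent of $h$; convexity in $x$ here is exactly Theorem~\ref{Main concavity result for traces} applied with $k$ in place of $h$. An uncoupled sum of this form is jointly convex in $(x,h)$.

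Finally, the pointwise supremum of a family of jointly convex functions is jointly convex, so $(x,h)\mapsto\tr h\, df(x) h$ is convex as required. I do not anticipate any substantive obstacle: once the Fenchel reformulation is in place, Theorem~\ref{Main concavity result for traces} supplies the rest essentially for free, and the only minor point to keep in mind is that all quantities remain finite (the supremum is attained at a specific self-adjoint $k$).
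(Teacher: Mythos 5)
Your proof is correct, and it takes a genuinely different route from the paper's. You pass to the Legendre transform: writing $\tr h\,df(x)h=\sup_{k=k^*}\bigl\{2\tr kh-\tr k\,df(x)^{-1}k\bigr\}$ (the supremum attained at $k=df(x)h$, which is again self-adjoint), each function inside the supremum splits into a term linear in $h$ plus a term convex in $x$ by Theorem~\ref{Main concavity result for traces}, and a supremum of jointly convex functions is jointly convex. The paper instead proves the defining convexity inequality directly as a majorisation $\beta\le\alpha$ of two positive definite quadratic forms on $\mathcal H\oplus\mathcal H$: it introduces the relative operator $\Gamma$ with $\alpha(\cdot,\cdot)=\beta(\Gamma\,\cdot,\cdot)$, extracts from the eigenvalue equation the identities $df(A_1)X=\gamma\,df(A)(\lambda X+(1-\lambda)Y)=df(A_2)Y$, and deduces $\gamma\ge 1$ from the same concavity statement, Theorem~\ref{Main concavity result for traces}. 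So both arguments hinge on exactly the same key lemma, but your duality argument replaces the eigenvalue analysis of $\Gamma$ with a one-line variational identity and is shorter and arguably more transparent; the paper's argument yields the slightly more structural conclusion $\Gamma\ge 1$ as an operator inequality. The only points worth making explicit in a final write-up are that $df(x)$ is a symmetric, positive definite, invertible operator on the real Hilbert space of self-adjoint matrices (which you do note, via the strictly positive Löwner matrix entries $(\lambda_i^p-\lambda_j^p)/(\lambda_i-\lambda_j)$), so that the completion-of-squares identity behind the Fenchel representation is legitimate and the supremum is finite and attained.
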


\begin{proof}
Keeping the notation as in the proof of Theorem~\ref{main theorem} we define two quadratic forms $ \alpha $ and $ \beta $ on $ \mathcal H\oplus \mathcal H $ by setting
\[
\begin{array}{rl}
\alpha(X\oplus Y)&=\lambda \tr X\, df(A_1)X +(1-\lambda) \tr Y\, df(A_2)Y\\[2ex]
\beta(X\oplus Y)&=\tr (\lambda X+(1-\lambda)Y)\, d f(A)(\lambda X+(1-\lambda)Y),
\end{array}
\]
where $ A_1,A_2 $ are positive definite matrices, and $ A=\lambda A_1+(1-\lambda) A_2 $ for some $ \lambda\in[0,1]. $
The statement of the theorem is equivalent to the majorisation
\begin{equation}\label{quadratic form majorisation}
\beta(X\oplus Y)\le\alpha(X\oplus Y)
\end{equation}
for arbitrary self-adjoint $ X,Y\in M_n\,. $ The quadratic form $ h\to\tr h\, d f(x)h $ is positive definite since     
\[
\tr h\, d f(x)h=\sum_{i,j=1}^n |(h e_i\mid e_j)|^2\frac{\lambda^p_i-\lambda^p_j}{\lambda_i-\lambda_j}\,,
\]           
where $ (e_i)_{i=1}^n $ is a basis in which $ x $ is diagonal and $ \lambda_1,\dots\lambda_n $ are the corresponding eigenvalues counted with multiplicity. We also notice that the corresponding sesqui-linear form is given by
\[
(h,h')\to\tr h'\, d f(x)h.
\]
The two quadratic forms $ \alpha $ and $ \beta $ are in particular positive definite. Therefore, there exists an operator $ \Gamma $ on $ \mathcal H\oplus\mathcal H $ which is positive definite in the Hilbert space structure given by $ \beta $ such that
\[
\alpha\bigl(X\oplus Y, X'\oplus Y'\bigr)=\beta\bigl(\Gamma(X\oplus Y), X'\oplus Y'\bigr)\qquad X,X',Y,Y'\in M_n\,,
\]
where we retain the notation $ \alpha $ and $ \beta $ also for the corresponding sesqui-linear forms.
Suppose $ \gamma $ is an eigenvalue of $ \Gamma $ corresponding to an eigenvector $ X\oplus Y. $ Then
\[
\alpha\bigl(X\oplus Y,X'\oplus Y'\bigr)=\beta\bigl(\gamma(X\oplus Y), X'\oplus Y'\bigr)\qquad\text{for}\quad  X',Y'\in M_n 
\]
or equivalently
\[
\begin{array}{l}
\lambda\tr X'\,d f(A_1)X+(1-\lambda)\tr Y'\,d f(A_2)Y\\[1.5ex]
=\gamma\tr (\lambda X'+(1-\lambda)Y')\, d f(A)(\lambda X+((1-\lambda)Y)
\end{array}
\]
for arbitrary $ X',Y'\in M_n\,. $ From this we may derive the identities
\[
d f(A_1)X=\gamma\, d f(A)(\lambda X+(1-\lambda)Y)=d f(A_2)Y
\]
and thus by setting $ M=d f(A)(\lambda X+(1-\lambda)Y), $ we obtain
\[
\begin{array}{rl}
d f(A)^{-1}(M)&=\lambda X+(1-\lambda) Y\\[1.5ex]
&=\lambda\, d f(A_1)^{-1}(\gamma M)+(1-\lambda)\, d f(A_2)^{-1}(\gamma M).
\end{array}
\]
By multiplying from the left with $ M^* $ and taking the trace we obtain
\[
\begin{array}{l}
\gamma\bigl(\lambda\tr M^* d f(A_1)^{-1} M + (1-\lambda)\,\tr M^* d f(A_2)^{-1}M\bigr)
=\displaystyle\tr M^* df(A)^{-1} M\\[2.5ex]
\ge\displaystyle \lambda\tr M^* df(A_1)^{-1} M +(1-\lambda)\tr M^* df(A_2)^{-1} M, 
\end{array}
\]
where the last inequality is implied by the concavity result in Theorem~\ref{Main concavity result for traces}.
This shows that the positive definite operator $ \Gamma\ge 1$ from which (\ref{quadratic form majorisation}) and the statement of the theorem follows.
\end{proof}

Since the dependence of the function $ f $ in $ \tr h\,df(x)h $ is linear we immediately obtain

\begin{corollary}

Let $ f $ be a function written on the form
\[
f(t)=\int_0^1 t^p\, d\mu(p)\qquad t>0,
\]
where $ \mu $ is a positive measure on the unit interval. Then the map
\[
(x,h)\to \tr h\, d f(x)h\qquad x>0, h^*=h
\]
of two variables is convex.
\end{corollary}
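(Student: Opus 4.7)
The plan is to exploit the fact that the Fréchet differential depends linearly on the underlying function. If $ f(t)=\int_0^1 t^p\,d\mu(p) $ and $ f_p(t)=t^p, $ then differentiating under the integral sign (justified by the smoothness of $ p\mapsto t^p $ uniformly on compact subsets of the positive reals, so that standard dominated-convergence arguments apply to the finite measure $ \mu $) yields
\[
d f(x)h=\int_0^1 d f_p(x) h\,d\mu(p)
\]
for every positive definite $ x $ and every self-adjoint $ h. $

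Next, taking the trace against $ h $ and interchanging the trace with the integral (both are bounded linear operations on the finite-dimensional space of matrices) gives
\[
\tr h\,d f(x)h=\int_0^1 \tr h\,d f_p(x) h\,d\mu(p).
\]
By Theorem~\ref{Theorem: convexity of Q-form}, for each fixed $ p $ with $ 0<p\le 1 $ the integrand $ (x,h)\mapsto\tr h\,d f_p(x) h $ is a convex function of its two arguments. The contribution of $ p=0 $ (if $ \mu $ charges this point) vanishes, since $ f_0 $ is constant and thus $ d f_0(x)=0. $

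Finally, a positive integral of convex functions is convex: indeed, for any two pairs $ (x_1,h_1),(x_2,h_2) $ and any $ \lambda\in[0,1], $ the pointwise inequality supplied by Theorem~\ref{Theorem: convexity of Q-form} at each $ p\in(0,1] $ is preserved by integration against the positive measure $ \mu. $ This establishes convexity of $ (x,h)\mapsto\tr h\,d f(x)h $ and proves the corollary. No step presents a real obstacle; the only mild care needed is the interchange of integration with differentiation and with the trace, which is routine in this finite-dimensional setting.
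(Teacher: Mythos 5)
Your proof is correct and follows the same route as the paper, which simply invokes the linearity of $f\mapsto df(x)h$ together with Theorem~\ref{Theorem: convexity of Q-form} to conclude that a positive superposition of the convex maps $(x,h)\mapsto\tr h\,df_p(x)h$ is convex. Your additional remarks on interchanging integration with differentiation and the trace, and on the trivial contribution at $p=0$, are routine fillings-in of details the paper leaves implicit.
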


If we in the corollary above choose $ \mu $ as the Lebesgue measure, we realise that
\[
f(t)=\frac{t-1}{\log t}\qquad t>0
\]
is an example of a function such that $ (x,h)\to \tr h\, d f(x)h $ is convex.
Moreover, the perspective $ g $ of $ f $ given by
\[
g(t,s)=s f(ts^{-1})=s\frac{ts^{-1}-1}{\log(ts^{-1})}=\frac{t-s}{\log t-\log s}\qquad t,s>0
\]
is operator concave, and since
\[
\tr h\, d\log(x)^{-1} h=\tr h g(L_x,R_x)h
\]
this directly shows that the function $ x\to\tr h\,d\log(x)^{-1}h $ is concave, cf. \cite[equation (3.4)]{kn:lieb:1973:1}.

\section{More trace functions}

\begin{lemma}
Let $ K $ be a contraction. Then
\[
\psi(A)=q(A^{q-1}-K(K^*AK)^{q-1}K^*)\ge 0
\]
for $ -1\le q\le 1. $
\end{lemma}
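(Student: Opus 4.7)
The plan is to bring $\psi(A)$ into a form whose positivity follows from operator monotonicity of $t\mapsto t^q$. Throughout I will assume that $K^*AK$ and $KK^*$ are invertible (for instance, that $K$ is square and invertible), so that all fractional powers below are unambiguous.

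First I would set $Y=A^{1/2}K$, so that $Y^*Y=K^*AK$ and $YY^*=A^{1/2}KK^*A^{1/2}$. The standard intertwining identity $Y\,f(Y^*Y)=f(YY^*)\,Y$ applied to $f(t)=t^{q-1}$ gives
$$Y(Y^*Y)^{q-1}Y^* \;=\; (YY^*)^{q-1}YY^* \;=\; (YY^*)^q,$$
and conjugating by $A^{-1/2}$, together with $A^{q-1}=A^{-1/2}A^qA^{-1/2}$, rewrites the original expression as
$$\psi(A)\;=\;q\,A^{-1/2}\bigl(A^q \,-\, (A^{1/2}KK^*A^{1/2})^q\bigr)A^{-1/2}.$$

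The next step is a monotonicity comparison. Because $K$ is a contraction, $KK^*\le I$, and therefore $A^{1/2}KK^*A^{1/2}\le A$. By Löwner's theorem, $t\mapsto t^q$ is operator monotone for $q\in[0,1]$ and operator monotone decreasing for $q\in[-1,0]$. Consequently the difference $A^q-(A^{1/2}KK^*A^{1/2})^q$ is positive semidefinite when $q\in(0,1]$ and negative semidefinite when $q\in[-1,0)$; multiplying by $q$ produces a positive semidefinite operator in both cases, and conjugation by $A^{-1/2}$ preserves positivity, so $\psi(A)\ge 0$. The boundary case $q=0$ gives $\psi(A)=0$ outright.

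The main obstacle will be the algebraic identity in the first step, and in particular the correct interpretation of $(A^{1/2}KK^*A^{1/2})^q$ for $q<0$ when $KK^*$ is only positive semidefinite; in the generic full-rank setting assumed here the proof reduces to the classical Löwner monotonicity of fractional powers, with no convexity machinery or integral representations required.
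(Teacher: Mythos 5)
Your proof is correct, and it takes a genuinely different route from the paper's. The paper splits into two cases and invokes Jensen-type operator inequalities: for $0\le q\le 1$ it uses $K^*A^{1-q}K\le (K^*AK)^{1-q}$ (operator concavity of $t^{1-q}$ for a contraction $K$), and for $-1\le q\le 0$ the reverse, sub-homogeneous Jensen inequality coming from operator convexity of $t^{1-q}$ with $1\le 1-q\le 2$; in each case it then inverts, using that $X\mapsto X^{-1}$ is order-reversing, and multiplies by the sign of $q$. You instead symmetrize via $Y=A^{1/2}K$ and the intertwining identity $Y(Y^*Y)^{q-1}Y^*=(YY^*)^q$, which rewrites $\psi(A)$ as $q\,A^{-1/2}\bigl(A^q-(A^{1/2}KK^*A^{1/2})^q\bigr)A^{-1/2}$ and reduces everything to the single fact that $t\mapsto t^q$ is operator monotone for $0\le q\le 1$ and operator antitone for $-1\le q\le 0$, applied to $A^{1/2}KK^*A^{1/2}\le A$. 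Your argument is more unified (one monotonicity statement covers both signs of $q$) and more elementary in that it needs only L\"owner--Heinz rather than the Jensen operator inequalities; the paper's version is shorter on the page because it cites the Jensen machinery already in its toolkit. Both proofs require the same reduction to invertible $K$ (the paper says ``by continuity we may assume $K$ invertible''), which you flag explicitly; note that for $q<1$ the expression $(K^*AK)^{q-1}$ already forces $K^*AK$ to be invertible, so this is a restriction inherent in the statement rather than a loss of generality in either proof.
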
 

\begin{proof} By continuity we may assume $ K $ invertible.
For $ 0\le q\le 1 $ we use the inequality
\[
K^* A^{1-q} K\le (K^*AK)^{(1-q)},
\]
or by inversion
\[
K^{-1}A^{-(1-q)}(K^*)^{-1}=(K^* A^{1-q} K)^{-1}\ge (K^*AK)^{-(1-q)}
\]
which implies that
\[
A^{q-1}-K(K^*AK)^{q-1}K^*\ge 0.
\]
For $ -1\le q\le 0 $ we apply Jensen's sub-homogeneous operator inequality
\[
K^* A^{1-q} K\ge (K^*AK)^{1-q},
\]
or by inversion
\[
K^{-1}A^{-(1-q)} (K^*)^{-1}=(K^* A^{1-q} K)^{-1}\le (K^*AK)^{q-1}
\]
which implies that
\[
A^{q-1}\le K(K^*AK)^{q-1}K^*
\]
\end{proof}

\begin{corollary}
Let $ K $ be a contraction. The mapping
\[
 \varphi(A)=\tr (K^*AK)^q-\tr A^q\qquad A>0
\] 
is decreasing for $ -1\le q\le 1. $
\end{corollary}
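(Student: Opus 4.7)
The plan is to show that the Fréchet differential $d\varphi(A)(H)$ is non-positive whenever $H \ge 0$; this is equivalent to $\varphi$ being decreasing on the cone of positive definite matrices. The key observation is that, when we differentiate the trace expression, the lemma just proved will appear directly under the trace sign against $H$.

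First, I would use the standard formula $d(\tr f(A))(H) = \tr(f'(A)\,H)$ applied to $f(t)=t^q$, giving
\[
d(\tr A^q)(H) = q\tr(A^{q-1}H).
\]
For the term $\tr(K^{*}AK)^{q}$, combining the chain rule with linearity of $A \mapsto K^{*}AK$ yields
\[
d\bigl(\tr(K^{*}AK)^{q}\bigr)(H) = q\tr\bigl((K^{*}AK)^{q-1}K^{*}HK\bigr) = q\tr\bigl(K(K^{*}AK)^{q-1}K^{*}H\bigr),
\]
where the last equality uses cyclicity of the trace. Subtracting, we obtain
\[
d\varphi(A)(H) = -\tr\bigl(q(A^{q-1}-K(K^{*}AK)^{q-1}K^{*})\,H\bigr) = -\tr\bigl(\psi(A)\,H\bigr).
\]

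Since $\psi(A) \ge 0$ by the preceding lemma and $H \ge 0$, the product has non-negative trace, so $d\varphi(A)(H) \le 0$. Integrating along the line segment between any two positive definite matrices $A \le B$ then gives $\varphi(B) \le \varphi(A)$, which is the claimed monotonicity.

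The only technical point worth flagging is the case $-1\le q<0$, where $(K^{*}AK)^{q}$ requires $K^{*}AK$ to be positive definite; this is guaranteed as soon as $K$ is invertible, and the general contraction case follows by the same continuity argument invoked in the proof of the lemma. Otherwise the argument is essentially a one-line consequence of the lemma combined with the derivative formula for trace functions, so no substantial obstacle is expected.
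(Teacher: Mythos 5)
Your proposal is correct and follows essentially the same route as the paper: compute the Fréchet differential via $d\varphi(A)D=-\tr\psi(A)D$ and invoke the lemma to conclude $d\varphi(A)D\le 0$ for $D\ge 0$. The extra details you supply (cyclicity of the trace, integration along a segment, the continuity remark for $q<0$) are sound but only flesh out what the paper leaves implicit.
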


\begin{proof}
The Fréchet differential of $ \varphi(A) $ is given by
\[
d\varphi(A)D=-q\tr\bigl(A^{q-1}- K(K^*AK)^{q-1}K^*\bigr)D=-\tr \psi(A) D,
\]
thus $ d\varphi(A)D\le 0 $ for arbitrary $ D\ge 0 $ by the preceding lemma.
\end{proof}

\begin{acknowledgement}
We thank Peter Harremoës for pointing out that the convexity of the residual entropy of a compound system may be easily inferred by considering it as a sum of relative entropies.
\end{acknowledgement}

{


      \noindent Frank Hansen: Institute for International Education, Tohoku University, Japan. Email: frank.hansen@m.tohoku.ac.jp.
      }

\end{document}